 \DeclareMathOperator{\diag }{diag}
\newtheorem{thm}{Theorem}
\newtheorem{lem}{Lemma}
\title{Future of Bianchi I magnetic cosmologies with kinetic matter}
\author[1]{Ho Lee\footnote{holee@khu.ac.kr}}
\author[2]{Ernesto Nungesser\footnote{em.nungesser@upm.es}}
\affil[1]{Department of Mathematics and Research Institute for Basic Science, College of Sciences, Kyung Hee University, Seoul 02447, Republic of Korea}
\affil[2]{M2ASAI, Universidad Polit\'{e}cnica de Madrid, ETSI Navales, Avda. de la Memoria, 4, 28040 Madrid, Spain}
\begin{document}
\maketitle

\begin{abstract}
We show under the assumption of small data that solutions to the Einstein-Vlasov system with a pure magnetic field and Bianchi I symmetry isotropise and tend to dust solutions. We also obtain the decay rates for the main variables. This generalises part of the work [V.~G.~LeBlanc, Classical Quantum Gravity 14, 2281-2301 (1997)] concerning the future behaviour of orthogonal perfect fluids with a linear equation of state in the presence of a magnetic field to the Vlasov case. 
\end{abstract}

\section{Introduction}

The $\Lambda$CDM-model is very successful in describing most cosmological phenomena. However in recent years discrepancies related to key cosmological parameters have arisen which is sometimes also called the Hubble tension, due to discrepancies concerning the value of the Hubble constant. How to solve this tension is unclear, because there exist a vast amount of possibilities, but no preferred alternative. For a recent review we refer to \cite{DV}. Among the possible solutions are the existence of primordial magnetic fields (cf.~Section 12.4 of \cite{DV} and \cite{Jedamzik}) or the replacement of the spatially flat FLRW metric with the Bianchi type-I metric (cf.~Section 14.2 of \cite{DV} and \cite{Akarsu}).

Here we will consider the late time asymptotics of Bianchi I solutions to the Einstein-Vlasov system with a pure magnetic field assuming small data. More precisely we will assume that the dispersion of the momenta of the particles, the anisotropy of the space-time and the magnetic field are small. Note that we do not assume the presence of charged particles, but just a magnetic field. For some interesting results concerning the former case we refer to \cite{BF, NT, Tchapnda}. For more background on the interest of primordial magnetic fields we refer to \cite{Jedamzik} and references therein.

A magnetic field makes the space-time necessarily anisotropic and it turns out that the existence of a source-free magnetic field restricts the Bianchi types to be of type I, II, III, VI$_0$ or VII$_0$ (cf.~\cite{LB2} and references therein). The late time asymptotics for these space-times with a magnetic field and a non-tilted perfect fluid has been studied in \cite{C, HW, LB, LB2, LKW}. 

Here we replace the non-tilted perfect fluid with collisionless matter and study its long time behaviour. It is very often the case that additional simplifications are assumed such as Bianchi I LRS symmetry, so that the metric is diagonal and has only two different components or that the magnetic field is aligned along a shear eigenvector. Here we do not make these simplifications and follow the framework of \cite{LB} where without loss of generality the magnetic field is aligned along one of the frame vectors. Using the methods developed in \cite{EN} (see also \cite{HU} for the reflection symmetric case) we obtain that Bianchi I solutions to the Einstein-Vlasov system with a magnetic field tend to a FLRW solution with dust. The magnetic field which is present in the beginning vanishes asymptotically. We have thus generalised the results of LeBlanc \cite{LB} to the Vlasov case in the case of Bianchi I under the assumption of small data.  For completeness we also consider the case of a positive cosmological constant obtaining a de Sitter like behaviour following Hayoung Lee \cite{HL}, see also \cite{Ayissi, NT,Tchapnda}. 

One conclusion of this is that one might add to the standard cosmological model a small magnetic field, replace the FLRW metric by a non-diagonal Bianchi I metric such that the shear is small or replace a perfect fluid model by collisionless matter with small dispersion of the momenta, obtaining a similar qualitative asymptotic behaviour as without these extensions. However the additional degrees of freedom might be used to relieve the Hubble tension. We refer to \cite{DV} and references therein for different approaches in that direction.

The structure of the paper is as follows. In Section \ref{general} we briefly present the general equations and introduce some notations. For more details we refer to \cite{Ring,WE}. In Section \ref{bianchiI} we impose Bianchi I symmetry and express the Maxwell, Vlasov and Einstein equations and all the relevant quantities using an orthonormal frame as in \cite{LB,WE}. The late time asymptotics are obtained in Section \ref{massive}. Finally in Section \ref{lambda} we consider the late time behaviour in presence of a positive cosmological constant.

\section{The Einstein-Vlasov system with a pure magnetic field} \label{general}
In the following Greek indices will denote space-time indices and run from 0 to 3, while Latin indices denote space-like indices and run from 1 to 3. 

Let us consider the source free Einstein-Vlasov-Maxwell equations in a space-time with metric $g_{\alpha\beta}$: 
\begin{align}
\label{einstein}&G_{\alpha\beta}=T_{\alpha\beta}= T_{\alpha\beta}^{\mathrm{Vl}}+ T_{\alpha\beta}^{\mathrm{M}},\\
\label{maxwell1}&\nabla_\beta F^{\alpha\beta} =0, \\
\label{maxwell2} &\nabla_{[\delta} F_{\alpha\beta]}=0,\\
\label{vlasov} &p^\alpha \frac{\partial f}{\partial x^\alpha}-\Gamma^a_{\beta \gamma} p^\beta p^\gamma \frac{\partial f}{\partial p^a}= 0,
\end{align}
with
\begin{align}\label{TVl}
T_{\alpha\beta}^{\mathrm{Vl}}=  \int \chi p_{\alpha} p_{\beta},
\end{align}
where the integration is over the future pointing mass-shell at a given space-time point which is defined by
\begin{align}\label{massshell}
p_{\alpha} p_{\beta}  \,  { g}^{\alpha\beta}=- m^2, \quad p^0>0,
\end{align}
where $m \geq 0 $, $\chi$ is the distribution function multiplied by the Lorentz invariant measure and
\begin{align}\label{magnetict}
T_{\alpha\beta}^{\mathrm{M}}= F_{\alpha\gamma} {F_{\beta}}^\gamma-\frac14 g_{\alpha\beta} F_{\gamma\delta} F^{\gamma\delta}.
\end{align}
Equation \eqref{einstein} is the Einstein equation where $G_{\alpha\beta}$ is the Einstein tensor which is equal to the energy momentum tensor $T_{\alpha\beta}$ which is split into the part coming from the matter described by kinetic theory $T_{\alpha\beta}^{\mathrm{Vl}}$ and the part coming from the Maxwell field strength $F_{\alpha\beta}$ which gives rise to $T_{\alpha\beta}^{\mathrm{M}}$. 

From \eqref{magnetict} we can see that $T_{\alpha\beta}^{\mathrm{M}}$ is trace-free. Equations \eqref{maxwell1}--\eqref{maxwell2} are the Maxwell equations for the Maxwell field tensor $F_{\alpha\beta}$ without sources, so in particular we are considering that the particles are not charged.

Finally we have the Vlasov equation \eqref{vlasov} for the distribution function $f(x^\alpha, p^b)$ where $p^\alpha$ are the momenta of the particles on the mass-shell $p^\alpha p_\alpha= -m^2$, where $p^0$ is expressed via the mass-shell condition in terms of the space-like components of the momenta. For simplicity we will assume that the initial distribution function has compact support in momentum space (away from the origin in the massless case).


\section{The Bianchi I symmetric Einstein-Vlasov system with a pure magnetic field} \label{bianchiI}
\subsection{Bianchi I spacetimes in an orthonormal frame approach}

We will use an orthonormal frame approach as in \cite{LB} and thus the main variables are the commutator functions $\gamma^{\delta}_{\alpha\beta}$ given by
\begin{align}
[e_\alpha, e_\beta]= \gamma^{\delta}_{\alpha\beta}e_\delta,
\end{align}
where $\{e_\alpha \}$ are the frame vectors which are orthonormal.

For Bianchi I the only non-vanishing commutator functions are 
\begin{align}\label{commutators}
\gamma^{a}_{0b}= - \gamma^{a}_{b0} = -{ \theta^{a}}_{b}- {\epsilon^{a}}_{bc}\Omega^c,
\end{align}
where $\theta_{ab}$ are the frame components of the rate of expansion tensor, $\epsilon_{abc}$ denotes the alternating
symbol with $\epsilon_{abc}=1$ and $\Omega^a$ are the components of the local angular velocity of the spatial frame $\{e _ a\}$ with respect to a Fermi-propagated spatial frame.

We express the frame components of the rate of expansion tensor in terms of the rate of shear tensor $\sigma_{ab}$ and the rate of volume expansion scalar $\theta=\theta^a_a$:
\begin{align}\label{sigma}
\sigma_{ab}= \theta_{ab}-\frac13 \theta \delta_{ab},
\end{align}
where $\delta_{ab}$ is the usual Kronecker symbol.

The connection coefficients for an orthonormal frame with 
\begin{align}\label{ortho}
g_{\alpha\beta}= \eta_{\alpha \beta} = \diag (-1,1,1,1)
\end{align} 
are given as
\begin{align}\label{connections}
\Gamma_{\beta\gamma}^{\delta}=\frac12 \eta^{\alpha\delta} (\gamma^\epsilon_{\gamma\beta} \eta_{\alpha\epsilon}
+ \gamma^\epsilon_{\alpha\gamma} \eta_{\beta\epsilon} - \gamma^\epsilon_{\beta\alpha} \eta_{\gamma\epsilon}) ,
\end{align}
(cf.~(1.14) and (1.59) of \cite{WE}). As a consequence of \eqref{commutators} and \eqref{connections} the only non-vanishing connection coefficients are
\begin{align}\label{connection}
\Gamma_{bc}^{0} =\frac12 \left( \gamma^b_{c0}+\gamma^c_{b0} \right), \quad
 \Gamma_{0b}^{d}= \frac12  \left(\gamma^d_{b0}  + \gamma^b_{ d0} \right), \quad
  \Gamma_{b0}^{d}=  \frac12 \left( \gamma^d_{0b}+\gamma^b_{d0} \right).
\end{align}

Since we use an orthonormal frame upper and lower spatial indices are in principal not distinguished. However to increase readability we will use the Einstein summation convention that repeated indices (one lower and one upper index) will be summed over.

\subsection{Maxwell equations}

For a pure magnetic field relative to the invariant frame we have:
\begin{align*}
h_\alpha= (0,h_a),
\end{align*}
and the Maxwell equations can be simplified to (cf.~(2.4) of \cite{LKW}):
\begin{align*}
\dot{h}_a= -\frac23 \theta h_a + \sigma_{ab} h^b + \epsilon_{abc}\Omega^c h^b,
\end{align*}
where a dot means derivation with respect to time $t$.
Now for simplicity we align the frame vector $e_1$ as in \cite{LB} along the magnetic field, so that 
\begin{align}\label{h}
h_\alpha=(0,h_1,0,0).
\end{align}
Later in the end of Section \ref{hubblenormalised} we will see that if the magnetic field is non-zero initially it will remain non-zero (cf.~\eqref{blower}).

As a consequence from \eqref{h} the Maxwell equations are
\begin{align}\label{maxwellBianchiI}
\dot{h}_1 =  \left(-\frac23 \theta + \sigma_{11}\right)h^1,  \quad \dot{h}_2 &= \left(\sigma_{21}- \Omega^3\right) h^1,  \quad \dot{h}_3 =  \left(\sigma_{31}+ \Omega^2\right)h^1.
\end{align}
In order to have the magnetic field along $e_1$ for all time, we must have 
\begin{align} \label{maxconstraint}
\sigma_{21}- \Omega^3 = 0,\quad \sigma_{31}+ \Omega^2 = 0.
\end{align}
There is no restriction for $\Omega^1$, but for simplicity we choose it to vanish.  Let us from now on use the notation $h_1=h$.

For a pure magnetic field we have 
\begin{align*}
F_{\alpha \beta}={ \epsilon_{\alpha \beta}}^{\gamma \delta} h_{\gamma} u_{\delta},
\end{align*}
where $\epsilon^{\alpha \beta\gamma \delta}$ is the totally antisymmetric permutation symbol (we use the convention that $\epsilon^{0123}=1$, cf.~page 16 of \cite{WE}) and $u^{\alpha}$ is the unit timelike vector field given by $e_0$. As a consequence of \eqref{h} we have that the only non-vanishing components of $F_{\alpha \beta}$ are:
\begin{align}\label{faraday}
F_{23}= h = - F_{32}.
\end{align}
This means, using \eqref{magnetict}, \eqref{ortho} and \eqref{faraday}
\begin{align*}
T_{\alpha\beta}^{\mathrm{M}} = \frac12 h^2 \diag (1,-1,1,1).
\end{align*}
It follows from the standard decomposition of the energy-momentum tensor (cf.~(1.19) of \cite {WE}) with $u^\alpha=e_0= (1,0,0,0)$ that the non-vanishing terms are
\begin{align*}
& \rho^{\mathrm{M}} = \frac12 h^2, \\
& \mathcal{P}^{\mathrm{M}} = \frac13 \rho^{\mathrm{M}}=\frac16  h^2 ,\\
& \pi^{\mathrm{M}}_{\alpha \beta} = \frac13 h^2 \diag (0, -2, 1, 1).
\end{align*}

\subsection{The Vlasov equation with Bianchi I symmetry}
 For the Vlasov equation \eqref{vlasov} we need to compute the contraction of the connection coefficients with $p^\beta p^\gamma$. Considering the non-vanishing components of the connection \eqref{connection} we obtain
 \begin{align*}
 \Gamma^a_{\beta \gamma} p^\beta p^\gamma &= \Gamma^a_{0c} p^0 p^c + \Gamma^a_{b 0} p^b p^0 = \frac12p^0  \left(\gamma^a_{c0} p^c  + \gamma^c_{ a0} p_c   +
   \gamma^a_{0b}p^b +\gamma^b_{a0} p_b \right)   \\
  &=  p^0 p_c \gamma^c_{ a0} = p^0 p_c \left( {\theta^{c}}_{a} +{\epsilon^{c}}_{ad}\Omega^d \right).
 \end{align*}
As a consequence the Vlasov equation \eqref{vlasov} using a group invariant frame such that $f$ does not depend on $x^a$ turns into
\begin{align}\label{vlasovBianchiI}
 \frac{\partial f}{\partial t}- p_c \left( {\theta^{c}}_{a}+ {\epsilon^{c}}_{ad}\Omega^d \right) \frac{\partial f}{\partial p^a}= 0.
\end{align}
This is a partial differential equation which is of first order. We thus can use the method of characteristics (cf.\ Chapters 1.4 and 1.5 of \cite{John}) which will be done as follows.

\subsubsection{The characteristic system}

Applying the method of characteristics, we see that the distribution function is constant along the characteristics and the parameter of the characteristic system can be chosen to be proper time. We define the characteristic curve $ P^a(t)$ by
\begin{align}\label{charac}
\dot{ P }^a=  - P_c \left( {\theta^{c}}_{a}+ {\epsilon^{c}}_{ad}\Omega^d \right),
\end{align}
where $P^a(t)=p^a$ given $t$. We use the notation $P^a$ to indicate that $p^a$ is parametrised by $t$ and call this the characteristic momenta. Then, consider the quantity
\begin{align*}
P = P^a P_a .
\end{align*}
It has the following evolution equation
\begin{align}
\nonumber \dot{P} &=   -2 P^a P_c \left( {\theta^{c}}_{a}+ {\epsilon^{c}}_{ad}\Omega^d \right)= -2 P^a P_c  {\theta^{c}}_{a}= -2 P^a P_c (\sigma^c_{a}+\frac13 \theta \delta^c_{a}) \\
\label{P} &=-\frac23 \theta P -2 P^a P_c \sigma^c_{a},
\end{align}
where we have used the symmetry of $P^a P_c$ together with the antisymmetry of ${\epsilon^{c}}_{ad}$ and \eqref{sigma}.

\subsubsection{The energy momentum tensor of the kinetic part}

From the kinetic part written in an orthonormal frame we have from \eqref{TVl}:
\begin{align}\label{TVlhomo}
T^{\mathrm{Vl}}_{\alpha\beta} = \int f(t,p) \frac{p_{\alpha}p_{\beta}}{p^0} dp , 
\end{align}
where the integration is over the future pointing mass-shell at a given space-time point, and we used the notations $ p = (p_1,p_2,p_3)$ and $dp=dp_1dp_2dp_3$. In particular, we have
\begin{align}
& \label{defrho} \rho^{\mathrm{Vl}} =T_{00} ^{\mathrm{Vl}} = \int f(t,p) p^0 dp,\\
& q_a^{\mathrm{Vl}} = - T_{0a} ^{\mathrm{Vl}}=  \int f(t,p) p_a dp,\\
& \mathcal{P}^{\mathrm{Vl}} = \frac13 S,\quad \pi^{\mathrm{Vl}}_{ab} = T_{ab}^{\mathrm{Vl}} - \frac13 S \delta_{ab},
\end{align}
with
\begin{align}
\label{defS} S= T_{ab}^{\mathrm{Vl}} \delta ^{ab}=  T_{11}^{\mathrm{Vl}}+ T_{22}^{\mathrm{Vl}}+ T_{33}^{\mathrm{Vl}} .
\end{align}
Note that in our frame \eqref{massshell} means that 
\begin{align*}
(p^0)^2 = p_a p^a + m^2,
\end{align*}
which implies $p_a p^a \leq (p^0)^2$. In particular with the definitions just introduced we obtain for any $a, b$ that
\begin{align}\label{energyconditions}
\vert \pi_{ab}^{\mathrm{Vl}} \vert \leq S \leq \rho^{\mathrm{Vl}}.
\end{align}

\subsection{The Einstein field equations}
Let us introduce the usual Hubble variable:
\begin{align*}
H=\frac13 \theta .
\end{align*}
Then, the Einstein field equations are given by (1.90)--(1.93) of \cite{WE}, which reduce in our case to:
\begin{align}
\label{evH} &\dot{H}= -H^2 - \frac13  \sigma_{ab}\sigma^{ab}- \frac16 (\rho + 3\mathcal{P}),\\
&\dot{\sigma}_{22}=-3H \sigma_{22}-2\sigma_{12}^2 +\frac13 h^2 + \pi_{22}^{\mathrm{Vl}},\\
&\dot{\sigma}_{33}=-3H \sigma_{33}-2\sigma_{13}^2+\frac13 h^2 + \pi_{33}^{\mathrm{Vl}},\\
&\dot{\sigma}_{23}= -3H\sigma_{23}-2\sigma_{12}\sigma_{13}+\pi_{23}^{\mathrm{Vl}},\\
&\dot{\sigma}_{12}=-3H\sigma_{12}+2\sigma_{12}\sigma_{22}+\sigma_{12} \sigma_{33}+\sigma_{13} \sigma_{23}+\pi_{12}^{\mathrm{Vl}},\\
&\dot{\sigma}_{13}=-3H\sigma_{13}+2\sigma_{13}\sigma_{33}+\sigma_{13} \sigma_{22}+\sigma_{12} \sigma_{23}+\pi_{13}^{\mathrm{Vl}},\\
\label{constraintrho}& \rho= \rho^{\mathrm{Vl}}+\rho^{\mathrm{M}} =\rho^{\mathrm{Vl}}+\frac12h^2= 3H^2-\frac12 \sigma_{ab}\sigma^{ab},\\
\label{momentumconstraint}&q_a= q_a^{\mathrm{Vl}} =0 ,
\end{align}
where we have used the constraint equations \eqref{maxconstraint}, the choice $\Omega^1=0$ and the fact that $\pi_{\alpha \beta}= \pi_{\alpha \beta}^{\mathrm{Vl}}+\pi^{\mathrm{M}}_{\alpha \beta}$ and $\sigma_{11}= - \sigma_{22}-\sigma_{33}$.

Now we use the constraint \eqref{constraintrho} to eliminate the energy density in \eqref{evH} and substitute $\mathcal{P}=\mathcal{P}^{\mathrm{M}}+\mathcal{P}^{\mathrm{Vl}}$. As a result we obtain
\begin{align*}
-\frac{\dot{H}}{H^2} = \frac32 + \frac{1}{4} \frac{\sigma_{ab}\sigma^{ab}}{H^2} + \frac16 \left (\frac{S}{H^2}+ \frac12 \frac{h^2}{H^2} \right).
\end{align*}

\subsection{The Hubble normalised Bianchi I symmetric Einstein-Vlasov system with a pure magnetic field} \label{hubblenormalised}
The full system consists of the Maxwell \eqref{maxwellBianchiI}--\eqref{maxconstraint}, the Vlasov \eqref{vlasovBianchiI}, and the Einstein field equations \eqref{evH}--\eqref{momentumconstraint}. We consider the Hubble normalised variables
\begin{align*}
 \Sigma_{ab}= \frac{\sigma_{ab}}{H}, \quad B= \frac{h}{H}, \quad \Pi_{ij}= \frac{\pi_{ij}^\mathrm{Vl}}{H^2},
 \end{align*} 
and the dimensionless time variable $\tau$ defined by
\begin{align}\label{dtdtau}
\frac{dt}{d\tau}=\frac{1}{H}.
\end{align} 
Introducing  the deceleration parameter $q$
\begin{align}\label{q}
q = -1-\frac{\dot{H}}{H^2},
\end{align} 
the full system of equations is
\begin{align}
 &\label{H-1}-\frac{\dot{H}}{H^2} = \frac32 + \frac{1}{4} \Sigma_{ab}\Sigma^{ab} + \frac16 \left (\frac{S}{H^2}+ \frac12 B^2 \right),\\
& \label{B} B'=(q-1-\Sigma_{22}-\Sigma_{33})B,\\
&\label{22} \Sigma'_{22}= (q-2) \Sigma_{22}-2\Sigma_{12}^2+\frac13 B^2 + \Pi_{22},\\
&\label{33}\Sigma'_{33}=(q-2) \Sigma_{33}-2\Sigma_{13}^2+\frac13 B^2 + \Pi_{33} , \\
&\label{23}\Sigma'_{23}= (q-2)\Sigma_{23}-2\Sigma_{12}\Sigma_{13}+\Pi_{23},\\
&\label{12} \Sigma'_{12}=(q-2)\Sigma_{12}+2\Sigma_{12}\Sigma_{22}+\Sigma_{12} \Sigma_{33}+\Sigma_{13} \Sigma_{23}+\Pi_{12},\\
&\label{13} \Sigma'_{13}=(q-2)\Sigma_{13}+2\Sigma_{13}\Sigma_{33}+\Sigma_{13} \Sigma_{22}+\Sigma_{12} \Sigma_{23}+\Pi_{13},\\
 \label{vlasovtau}& f'- p_c \left( \Sigma^{c}_{a}+\delta^{c}_{a}+\frac{{\epsilon^{c}}_{ad}\Omega^d}{H} \right) \frac{\partial f}{\partial p^a}= 0,\\
&\label{einsteinconstraints} \frac{\rho}{H^2}= \frac{\rho^{\mathrm{Vl}}}{H^2}+\frac12B^2= 3-\frac12 \Sigma_{ab}\Sigma^{ab}, \quad q_a^{\mathrm{Vl}} =0,\\
&\label{maxconstraints}\Sigma_{21}= \frac{\Omega^3}{H},\quad \Sigma_{31}=-\frac{ \Omega^2}{H}, \quad \frac{\Omega^1}{H}=0,
\end{align}
where a tilde denotes differentiation with respect to $\tau$.

\subsubsection{A first lower bound on the magnetic field}
From \eqref{einsteinconstraints} we have that
\begin{align*}
\frac{\rho}{H^2}+ \frac12 \Sigma_{ab}\Sigma^{ab}= \frac{\rho^{\mathrm{Vl}}}{H^2}+\frac12B^2+\frac12 \Sigma_{ab}\Sigma^{ab}= 3 .
\end{align*}
By definition $\rho$ and $\rho^{\mathrm{Vl}}$ are non-negative, thus we have that  $\frac{\rho}{H^2}$, $\frac{\rho^{\mathrm{Vl}}}{H^2}$,  any component of $\Sigma_{ab}$ and $B$ are bounded from above and below. From \eqref{energyconditions} we have that $\frac{S}{H^2}$ is bounded from above and by the definition \eqref{TVlhomo} with \eqref{defS} of $S$ it is also bounded from below since it is non-negative. Putting these bounds together and using \eqref{H-1} we can see from the definition \eqref{q} of $q$ that $q$ is also bounded from above and below. From \eqref{B} using a Gr{\"o}nwall type argument we can conclude that there exists a positive constant $C$ such that
\begin{align}\label{blower}
\vert B \vert \geq \vert B(\tau_0) \vert \exp [- C (\tau-\tau_0)],
\end{align}
which implies that if $B(\tau_0)$ is non-zero initially it will remain non-zero. One could also come to the conclusion that it has to remain non-zero using a local uniqueness argument together with the fact that equation \eqref{B} is homogeneous in $B$. The important thing here is that the frame vector $e_1$ defined along the magnetic field \eqref{h} will thus be always well-defined. We will obtain a better lower bound in the next section using some smallness assumptions.

\section{Future asymptotics} \label{massive}
We will now consider the massive case in order to generalise \cite{LB, EN}. For simplicity we assume that all particles have the same mass which we normalise to one. In \cite{LB} a pure magnetic field was considered with a perfect fluid. In \cite{EN} the Vlasov case with massive particles was treated without a magnetic field.  As already mentioned we are following the notation of \cite{LB} and in particular are using an orthonormal frame. The proof of our main result concerning massive particles however will follow the arguments of \cite{LN2,EN}. 

We will assume that the universe is expanding which in the case of small shear usually leads to a decrease in the dispersion of the momenta of the particles. The dispersion of the particles is an upper bound for $\frac{S}{ \rho^{\mathrm{Vl}}}$ (cf.\ \eqref{supportbound} below), which means that if the dispersion of the particles decreases, there is a dust-like behaviour. As a consequence it is reasonable to expect that the solution to the Einstein-Vlasov system with a small dispersion of the momenta should behave similarly to the solution of the Einstein-dust system which has already been studied in \cite{LB} (considering $\gamma=1$ which corresponds to dust). 

In order to actually prove this, we will use a bootstrap argument, which is an analogue of mathematical induction where instead of natural numbers non-negative real numbers are used (cf.\ Section 10.3 of \cite{Rendall}). The expectation is that the decay rates are like in the dust case or similar. So we will start with some weaker decay rates in the hope that using the evolution equations we can close the bootstrap argument, which will be done in Section \ref{main} in the first part of Lemma \ref{lem1}. Before that we will obtain some relations between the dispersion of the momenta and related quantities.

\subsection{Dispersion of the momenta and related quantities}
In \cite{LN2,EN} massive Vlasov particles were considered and it was shown that the dispersion of the momenta of the particles tends to zero if the shear is small. From \eqref{P} we have that
\begin{align}\label{support}
P'= -2 P -2 P^a P_c \Sigma^c_{a}.
\end{align}
This means in particular that the evolution of the dispersion of the particles does only depend on the shear of the space-time. The Vlasov equation does not depend explicitly on the magnetic field and the evolution of the support does also not depend explicitly on the local angular velocity with respect to a Fermi-propagated spatial frame. In fact the part depending on it vanished while deducing \eqref{P}. As a consequence we obtain again a similar result to Proposition 1 and Corollary 1 of \cite{LN2}. From \eqref{support} we have
\begin{align}\label{ineqP}
\frac{P'}{P} \leq -2 + 2 \sqrt{\Sigma_{ab}\Sigma^{ab}},
\end{align}
which implies that
\begin{align}\label{estimateP}
P(\tau) \leq P(\tau_0) \exp \int^{\tau}_{\tau_0} \left (-2 + 2 \sqrt{\Sigma_{ab}\Sigma^{ab}}\right ) ds.
\end{align}

To obtain our main result concerning massive particles we will assume smallness of the anisotropy, the magnetic field and the dispersion of the momenta, i.e.~we assume that  $\vert \Sigma_{ab}(\tau_0)\vert $, $\vert B(\tau_0)\vert $ and $ \hat{P}(\tau_0)$ are small where $\hat{P}$ is defined as
\begin{align}\label{defPhat}
\hat{P}(\tau)= \sup\{ \vert p \vert ^2 =\delta_{ab}p^ap^b : f(\tau,p)\neq 0 \}.
\end{align} 
Then, we use $\hat{P}$ to estimate $S$ as follows:
\begin{align}\label{supportbound}
S = \int f(\tau ,p)  \frac{p_1^2 + p_2^2 + p_3^2}{p^0} dp \leq \hat{P}  \int f(\tau ,p)  \frac{1}{p^0} dp \leq  \hat{P} \rho^{\mathrm{Vl}},
\end{align} 
since in the massive case $p^0 \geq 1$. Using \eqref{einsteinconstraints} we have $\frac{\rho^{\mathrm{Vl}}}{H^2}\leq 3$ so that
 \begin{align}\label{boundSH2}
  \frac{S}{H^2} \leq 3 \hat{P}.
 \end{align} 
This implies that we can also bound $\Pi_{ij}$ due to \eqref{energyconditions}
 \begin{align}\label{Pi}
\vert \Pi_{ij}\vert =\frac{1}{H^2}\vert \pi_{ij}^{\mathrm{Vl}}\vert \leq \frac{S}{H^2}  \leq 3 \hat{P}.
 \end{align}
 
 It turns out that due to the presence of the magnetic field the estimate of the components of $\Sigma_{ij}$ will not be the same for all indices and it will be useful to consider the sum and difference of $\Sigma_{22}$ and $\Sigma_{33}$ which is also common in Bianchi cosmologies. Defining 
  \begin{align*}
 \Sigma_+ = \frac12 \left (\Sigma_{22}+\Sigma_{33} \right),\\
 \Sigma_- = \frac12 \left (\Sigma_{22}-\Sigma_{33} \right),
 \end{align*}
we obtain from \eqref{22}--\eqref{33}
  \begin{align}
 &\label{sigmaplus} \Sigma'_{+}= (q-2) \Sigma_{+}-\Sigma_{12}^2-\Sigma_{13}^2+\frac13 B^2 + \frac12 (\Pi_{22}+\Pi_{33} ),\\
&\label{sigmaminus}\Sigma'_{-}= (q-2) \Sigma_{-}-\Sigma_{12}^2+\Sigma_{13}^2+  \frac12 (\Pi_{22}-\Pi_{33} ) .
  \end{align}
  We will now proceed to prove the following result concerning massive particles.

\subsection{Main results}\label{main}

\begin{lem}\label{lem1}
Consider massive solutions to the Einstein-Vlasov system with a pure magnetic field and with Bianchi I symmetry. Let  $\Sigma_{12} (\tau_0)$, $\Sigma_{13} (\tau_0)$, $\Sigma_{23} (\tau_0)$, $\Sigma_{+} (\tau_0)$, $\Sigma_{-} (\tau_0)$, $H(\tau_0)$, $B(\tau_0)$ and $f(\tau_0)$ be  initial data of the Einstein-Vlasov system satisfying the constraints such that $f(\tau_0)$ has compact support in momentum space and $ \vert B(\tau_0) \vert $, $\vert \Sigma_{12} (\tau_0)\vert $, $\vert \Sigma_{13}  (\tau_0) \vert$, $\vert \Sigma_{23} (\tau_0)\vert $, $\vert \Sigma_{+} (\tau_0)\vert $, $\vert \Sigma_{-} (\tau_0)\vert $ and $\hat{P}(\tau_0)$ are non-zero and sufficiently small.
Then, there exists a small positive constant $\varepsilon$ such that the following estimates hold:
\begin{align}
\label{goalB} B &= O \left(\varepsilon e^{-\frac12\tau} \right),\\
\label{goalsigma} \Sigma_{ij} & = O \left(\varepsilon e^{-\frac32\tau} \right), \quad i\neq j\\
\label{goalsigmaminus} \Sigma_{-} & = O \left(\varepsilon e^{-\frac32\tau} \right), \\
\label{goalsigmaplus} \Sigma_{+} & = O \left(\varepsilon e^{-\tau} \right), \\
\label{goalP} \hat{P} &= O \left(\varepsilon e^{-2\tau} \right).
\end{align}
\end{lem}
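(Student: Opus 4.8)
The plan is to run a continuity (bootstrap) argument on $[\tau_0,\infty)$. Since the Hubble-normalised variables $\Sigma_{ab}$, $B$, $S/H^2$ and $q$ were already shown to be globally bounded, the solution exists for all $\tau$, so the only task is to propagate decay. I would fix $\varepsilon$ comparable to the size of the initial data and let $T$ be the supremum of times on which the enlarged bounds $|B|\le 2\varepsilon e^{-\tau/2}$, $|\Sigma_{ij}|\le 2\varepsilon e^{-3\tau/2}$ ($i\neq j$), $|\Sigma_-|\le 2\varepsilon e^{-3\tau/2}$, $|\Sigma_+|\le 2\varepsilon e^{-\tau}$ and $\hat P\le 2\varepsilon e^{-2\tau}$ all hold. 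The goal is to show that on $[\tau_0,T)$ each bound improves to one with constant strictly below $2\varepsilon$ when $\varepsilon$ is small, forcing $T=\infty$ and yielding \eqref{goalB}--\eqref{goalP}.

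The first step is to pin down the deceleration parameter. Combining \eqref{q} with \eqref{H-1} gives $q-\tfrac12=\tfrac14\Sigma_{ab}\Sigma^{ab}+\tfrac16\tfrac{S}{H^2}+\tfrac1{12}B^2$, and using $\Sigma_{ab}\Sigma^{ab}=6\Sigma_+^2+2\Sigma_-^2+2(\Sigma_{12}^2+\Sigma_{13}^2+\Sigma_{23}^2)$ together with \eqref{boundSH2}, the bootstrap assumptions give $q-\tfrac12=O(\varepsilon e^{-\tau})$ with an integrable error. Hence $q-1=-\tfrac12+O(\varepsilon e^{-\tau})$ and $q-2=-\tfrac32+O(\varepsilon e^{-\tau})$, which is precisely what makes each integrating factor below behave like a clean exponential. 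This is the quantitative statement that the solution approaches the dust value $q=\tfrac12$.

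I would then improve the bounds block by block. For $\hat P$, estimate \eqref{estimateP} and the integrability of $\sqrt{\Sigma_{ab}\Sigma^{ab}}$ (whose slowest piece is $\Sigma_+\sim e^{-\tau}$) give $\hat P(\tau)\le \hat P(\tau_0)e^{-2(\tau-\tau_0)}e^{O(\varepsilon)}$, improving \eqref{goalP}. For $B$, integrating \eqref{B} with $q-1-2\Sigma_+=-\tfrac12+O(\varepsilon e^{-\tau})$ yields $|B(\tau)|=|B(\tau_0)|e^{-\frac12(\tau-\tau_0)}e^{O(\varepsilon)}$, improving \eqref{goalB}. For the off-diagonal components and $\Sigma_-$, equations \eqref{23}, \eqref{12}, \eqref{13}, \eqref{sigmaminus} have homogeneous rate $q-2\approx-\tfrac32$ and sources that are either quadratic in the shear (of size $O(\varepsilon^2 e^{-3\tau})$) or proportional to $\Pi_{ij}$ (of size $O(\varepsilon e^{-2\tau})$ by \eqref{Pi}); a Duhamel estimate then gives $O(\varepsilon e^{-3\tau/2})$, since every source decays strictly faster than the homogeneous rate. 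The component that behaves differently is $\Sigma_+$: the source in \eqref{sigmaplus} contains $\tfrac13 B^2\sim\varepsilon^2 e^{-\tau}$, which decays slower than the homogeneous rate $e^{-3\tau/2}$, so the Duhamel integral $\int_{\tau_0}^\tau e^{-\frac32(\tau-s)}\tfrac13 B(s)^2\,ds=O(\varepsilon^2 e^{-\tau})$ dominates and produces the slower rate \eqref{goalsigmaplus}.

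Finally I would verify that all recovered constants carry only factors of the form $1+O(\varepsilon)$ or $O(\varepsilon)$, hence lie strictly below the bootstrap constants for $\varepsilon$ small; the bootstrap set is then open, closed and nonempty in the connected interval $[\tau_0,\infty)$, so it is the whole interval. The main obstacle is the coupled, multi-rate character of the system: the magnetic field drives $\Sigma_+$ through $B^2$ while $\Sigma_+$ feeds back into $B$ through the $-2\Sigma_+$ term in \eqref{B}, so the estimates for $B$, $\Sigma_+$ and $q$ must be closed simultaneously. The structural fact that makes this tractable is that the only source slower than its own homogeneous rate is $\tfrac13 B^2$ in \eqref{sigmaplus}, which is tracked explicitly, whereas every remaining coupling (the feedback of $\Sigma_+$ into $B$, and of the shear into $\hat P$) decays at least like $e^{-\tau}$ and therefore contributes only a bounded, integrable correction.
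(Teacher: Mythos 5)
Your strategy --- a single-pass bootstrap at the sharp rates, closed by Duhamel estimates, with the correct structural observation that $\tfrac13 B^2$ in \eqref{sigmaplus} is the only source decaying slower than its homogeneous rate (which is exactly why $\Sigma_+$ only achieves $e^{-\tau}$) --- is the same mechanism as the paper's proof, which instead bootstraps at deliberately weak rates ($e^{-\frac13(\tau-\tau_0)}$ for $B$, $e^{-\frac78(\tau-\tau_0)}$ for the shear), improves them through a contradiction argument, and then removes the $O(\varepsilon)$ corrections in the exponents in a second pass. Your intermediate computations (the identity for $q-\tfrac12$, the formula for $\Sigma_{ab}\Sigma^{ab}$ in terms of $\Sigma_\pm$ and the off-diagonal components, and the improvements for $\hat P$ and $B$) are correct.

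The gap is in your concluding step, where you claim that ``all recovered constants carry only factors of the form $1+O(\varepsilon)$ or $O(\varepsilon)$, hence lie strictly below the bootstrap constants for $\varepsilon$ small.'' This is a non sequitur precisely for the terms $\Pi_{ij}$, which enter the shear equations \emph{linearly} and are controlled via \eqref{Pi} only by $3\hat P$. With your normalisation ($\varepsilon$ comparable to the size of all the data, so $\hat P(\tau_0)\sim\varepsilon$, and bootstrap constants $2\varepsilon$), the Duhamel contribution of $\Pi_{23}$ to $\Sigma_{23}$ is
\begin{equation*}
\int_{\tau_0}^{\tau} e^{-\frac32(\tau-s)}\,\vert\Pi_{23}(s)\vert\,ds \;\le\; \int_{\tau_0}^{\tau} e^{-\frac32(\tau-s)}\, 6\varepsilon\, e^{-2(s-\tau_0)}\,ds \;\le\; 12\,\varepsilon\, e^{-\frac32(\tau-\tau_0)},
\end{equation*}
an additive term of order $\varepsilon$ whose absolute constant exceeds $2$; shrinking $\varepsilon$ does not make $12\varepsilon$ smaller than $2\varepsilon$, so the bootstrap as you set it up does not close for $\Sigma_{23}$ (nor for $\Sigma_{12}$, $\Sigma_{13}$, $\Sigma_{\pm}$, which see the same source). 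Only the genuinely quadratic contributions ($\Sigma_{12}\Sigma_{13}$, $B^2$, and the $e^{O(\varepsilon)}$ multiplicative factors) improve automatically as $\varepsilon\to 0$. The repair is the point the paper makes explicit: $\hat P$ need not be an independent bootstrap unknown, since by \eqref{estimateP} it satisfies the self-contained bound $\hat P(\tau)\le \hat P(\tau_0)\,e^{(-2+C\varepsilon)(\tau-\tau_0)}$ as soon as the shear bootstrap holds; one then imposes a hierarchy on the initial data, taking $\hat P(\tau_0)$ small compared with $\varepsilon$ (the hypotheses of the lemma allow this, and the paper invokes it: $\hat P(\tau_0)$ can be chosen ``independently of $\varepsilon$ as small as we want''). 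With that choice the $\Pi_{ij}$ contribution is bounded by $6\hat P(\tau_0)e^{-\frac32(\tau-\tau_0)}\ll \varepsilon e^{-\frac32(\tau-\tau_0)}$, your bootstrap closes, and the rest of the argument goes through --- indeed, in a form arguably cleaner than the paper's multi-stage proof.
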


\begin{proof}
The proof consists of two parts: in the first part we use a bootstrap argument to obtain certain decays, and then in the second part we improve the decay rates by removing the $\varepsilon$ from the decay rate exponents.

\begin{enumerate}
\item The first part will be a bootstrap argument.  Assume there exists an interval $[\tau_0, \tau)$ such that the following estimates hold:
\begin{align}
\label{bootB} \vert B \vert &\leq \varepsilon \exp \left[-\frac13 (\tau-\tau_0) \right],\\
\label{bootSigma2} \vert \Sigma_{ij}  \vert & \leq   \varepsilon \exp \left[-\frac78 (\tau-\tau_0) \right].
\end{align}

In what follows we will denote by $C$ some positive constant which might change from line to line.
Taking the supremum of \eqref{estimateP} and having in mind the definition \eqref{defPhat} we obtain with the bootstrap assumption \eqref{bootSigma2}
\begin{align}
\hat{P}(\tau) \leq \hat{P} (\tau_0)\exp \left(-2 + C\varepsilon \right)(\tau-\tau_0) , \label{estimateP2}
\end{align}
which implies decay for $\hat{P}(\tau)$. Since we assume that $\hat{P}(\tau_0)$ is small, $\hat{P}$ will remain small. Let us choose $\hat{P}(\tau_0)$ smaller than $\varepsilon$. From \eqref{H-1}, the bootstrap assumptions \eqref{bootB}--\eqref{bootSigma2}, the bound \eqref{boundSH2} and the definition \eqref{q} of $q$ we obtain
\begin{align}\label{estq}
\left\vert q - \frac12  \right\vert \leq  C \varepsilon.
\end{align}
As a result we obtain from \eqref{B} and the smallness assumptions
\begin{align*}
 -\frac12 - C \varepsilon \leq \frac{B'}{B} \leq -\frac12 + C \varepsilon,
\end{align*}
which integrated gives us
\begin{align}\label{estB}
\vert B(\tau_0) \vert e^{ \left( -\frac12 - C \varepsilon \right) (\tau-\tau_0)} \leq \vert B \vert \leq \vert B(\tau_0) \vert e^{ \left( -\frac12 + C \varepsilon \right) (\tau-\tau_0) } .
\end{align}
This is an improvement of the bootstrap assumption \eqref{bootB} choosing  $B(\tau_0)$ smaller than $\varepsilon$ and choosing  $\varepsilon$ to be small, and also an improvement of the first lower bound \eqref{blower}.

Now we want to improve \eqref{bootSigma2}.    From \eqref{Pi} we know that we can bound the terms $\Pi_{ij}$ by a constant times $\hat{P}$, which we already have estimated in \eqref{estimateP2}. We can estimate any quadratic term of  $\Sigma_{ij}$ via the bootstrap assumption \eqref{bootSigma2}. We also have an improved estimate for $B$ via \eqref{estB}.

For the rest of the variables we will use a contradiction argument. We will assume the opposite of the estimate we want to obtain in some interval obtaining a contradiction. Let us consider $\Sigma_{23}$. We want to show that \eqref{bootSigma2} can be improved for $\Sigma_{23}$ in that interval $[\tau_0, \tau)$ to
\begin{align}\label{improved}
 \vert \Sigma_{23}  \vert & \leq   \varepsilon \exp \left[ \left(-\frac32+\delta \right) (\tau-\tau_0) \right],
\end{align}
where $\delta$ is a small quantity such that $0<\delta<\frac58$ and does not depend on $\tau$. If $\Sigma_{23}$ satisfies \eqref{improved} we have improved the bootstrap assumption for $\Sigma_{23}$ in that interval $[\tau_0, \tau)$. Let us suppose the opposite namely that for any $0<\delta<\frac58$, there exist $\tau_1$ and $\tau_2$ such that $\tau_0 \leq \tau_1 < \tau_2 \leq \tau $ and
\begin{align}
\label{number} \vert \Sigma_{23} (\tau_*)  \vert & \leq   \varepsilon \exp \left[ \left(-\frac32+\delta\right) (\tau_*-\tau_0) \right], \quad t_* \in [\tau_0,\tau_1],\\
\label{contra}  \vert \Sigma_{23} (\tau_*)   \vert & \geq   \varepsilon \exp \left[ \left(-\frac32+\delta \right) (\tau_*-\tau_0) \right], \quad t_* \in [\tau_1,\tau_2).
\end{align}
Now let us consider the second interval. For that interval using \eqref{23} we have
\begin{align}\label{start}
\frac{\Sigma'_{23}}{\Sigma_{23}}= q-2- \frac{2\Sigma_{12}\Sigma_{13}}{\Sigma_{23}}+\frac{\Pi_{23}}{\Sigma_{23}}.
\end{align}
Using the bootstrap assumptions for $\Sigma_{12}$, $\Sigma_{13}$ which are \eqref{bootSigma2}, the estimate for $q$ \eqref{estq}, the estimate \eqref{estimateP2} together with \eqref{Pi} to estimate $\Pi_{23}$ and \eqref{contra} we obtain
\begin{align*}
\frac{\Sigma'_{23}}{\Sigma_{23}}  \leq -\frac32 + C  \varepsilon +  \frac{ C \varepsilon^2 e^{-\frac74 \tau_*} }{ \varepsilon  e^{\left(-\frac32 + \delta \right) \tau_*} }+ \frac{ C \hat{P}(\tau_0) e^{ \left(-2+C\varepsilon \right) \tau_*} }{ \varepsilon  e^{\left(-\frac32 + \delta \right) \tau_*} }.
\end{align*}
We can choose $\hat{P}(\tau_0)$ independently of $\varepsilon$ as small as we want. Choosing it smaller than $\varepsilon$ implies that for any small $\delta$ and small  $\varepsilon$ we can find a small $\delta_1$ which does not depend on $\delta$ such that
\begin{align*}
 \frac{\Sigma'_{23}}{\Sigma_{23}}  \leq -\frac32 + \delta_1.
\end{align*}
Integrating this inequality on $(\tau_1,\tau_*)$ with $t_* \in [\tau_1,\tau_2)$ we obtain
\begin{align*}
\Sigma_{23}(\tau_*) \leq \Sigma_{23}(\tau_1) \exp \left[ \left( -\frac32 + \delta_1 \right)(\tau_*-\tau_1) \right].
\end{align*}
Now we use \eqref{number} for $\tau_* = \tau_1$ to obtain
\begin{align*}
\Sigma_{23}(\tau_*) \leq \varepsilon  \exp \left[ \left(-\frac32+\delta\right) (\tau_1-\tau_0) \right] \exp \left[ \left( -\frac32 + \delta_1 \right)(\tau_*-\tau_1) \right].
\end{align*}
Since $\delta$ has been chosen arbitrarily and $\delta_1$ does not depend on $\delta$ we may assume that $\delta= \delta_1$ which implies
\begin{align*}
\Sigma_{23}(\tau_*) \leq \varepsilon  \exp \left[ \left(-\frac32+\delta\right) (\tau_*-\tau_0) \right],
\end{align*}
in the interval $\tau_* \in [\tau_1,\tau_2)$. Similarly starting from \eqref{start} one can obtain the corresponding estimate in the other direction, so that one can conclude
\begin{align*}
\vert \Sigma_{23}(\tau_*) \vert \leq \varepsilon  \exp \left[ \left(-\frac32+\delta\right) (\tau_*-\tau_0) \right],
\end{align*}
which contradicts \eqref{contra}.

We can use the same procedure integrating the evolution equations \eqref{22}--\eqref{13} for $\Sigma_{ij}$ and we obtain the following estimates for $\Sigma_{ij}$:
\begin{align}
&\label{est1}\vert \Sigma_{ij} (\tau) \vert \leq \vert \Sigma_{ij} (\tau_0)  \vert \exp \left[ \left(-\frac32 +\delta_1 \right)(\tau-\tau_0) \right], \quad i\neq j,\\
&\label{est2} \vert \Sigma_- (\tau) \vert \leq \vert \Sigma_- (\tau_0)  \vert \exp \left[ \left(-\frac32 +\delta_2 \right)(\tau-\tau_0)\right], \\
&\label{est3} \vert \Sigma_+ (\tau) \vert \leq \vert \Sigma_+ (\tau_0)  \vert \exp \left[ \left(-1 +\delta_3 \right)(\tau-\tau_0)\right], 
\end{align}
where $\delta_1$, $\delta_2$, $\delta_3$ are small quantities. We have improved the bootstrap assumption \eqref{bootSigma2} and have thus closed the bootstrap argument. Since the arguments used do not depend on $\tau$ we can conclude from the bootstrap argument that the estimates \eqref{estimateP2}, \eqref{estB} and \eqref{est1}--\eqref{est3} are in fact valid in the whole interval $[\tau_0, \infty)$.

\item In this part we want to get rid of the epsilon in the estimates obtained following \cite{LN2}. Using \eqref{ineqP} and the estimates \eqref{est1}--\eqref{est3} on $\Sigma_{ij}$ and \eqref{estimateP2} on $\hat{P}$ we obtain assuming that $\Sigma_{ij} (\tau_0)$ and $\hat{P}(\tau_0)$ are small the following:
\begin{align*}
(e^{2\tau} P)' = e^{2\tau}P' + 2 e^{2\tau} P \leq e^{2\tau} 2 \sqrt{\Sigma_{ij} \Sigma^{ij} }P \leq C \varepsilon e^{(-1 + C\varepsilon)\tau},
\end{align*}
which integrated gives us
\begin{align*}
P \leq  (C P(\tau_0) + \varepsilon) e^{-2\tau}.
\end{align*}
Taking the supremum, since $ \hat{P}(\tau_0)$ is  assumed to be small, gives the estimate for $\hat{P}$ which is \eqref{goalP}. Similarly using in addition of the assumptions on $\Sigma_{ij}$ and $\hat{P}$ the estimate of  \eqref{estB} we obtain with \eqref{B} that 
\begin{align*}
(B e^{\frac12 \tau} )' \leq C \varepsilon e^{(-1 + C\varepsilon)\tau},
\end{align*}
which integrated gives us our desired estimate for $B$. The same holds for $\Sigma_{ij}$ with $i\neq j$ where one obtains the inequality
\begin{align*}
(\Sigma_{ij} e^{\frac32 \tau} )' \leq C \varepsilon e^{(-1 + C\varepsilon)\tau}, \quad i \neq j,
\end{align*}
and integrating the desired result is $\Sigma_{ij} = O \left(\varepsilon e^{-\frac32\tau} \right)$ for $i \neq j$. Similarly for $\Sigma_-$. For $\Sigma_+$ however we only obtain
\begin{align*}
(\Sigma_{ij} e^{\frac32 \tau} )' \leq C \varepsilon e^{\frac12 \tau}, 
\end{align*}
due to the presence of the magnetic field. Integrating yields the desired result \eqref{goalsigmaplus} for $\Sigma_+$.

\end{enumerate}
\end{proof}

\noindent {\bf Remark.} Due to the lower bound \eqref{estB} for $B$, the estimate for $B$ obtained in Lemma 1 is in fact optimal. \medskip

With the estimates obtained we will obtain an estimate for the Hubble variable $H$ so that we then are able to express our estimates in terms of the time variable $t$. We will consider the case that $H(t_0)>0$.

\begin{thm}
Consider massive solutions to the Einstein-Vlasov system with a pure magnetic field and with Bianchi I symmetry. Let $\Sigma_{12} (\tau_0)$, $\Sigma_{13} (\tau_0)$, $\Sigma_{23} (\tau_0)$, $\Sigma_{+} (\tau_0)$, $\Sigma_{-} (\tau_0)$, $H(\tau_0)>0$, $B(\tau_0)$ and $f(\tau_0)$ be  initial data of the Einstein-Vlasov system satisfying the constraints such that $f(\tau_0)$ has compact support in momentum space and $ \vert B(\tau_0) \vert $, $\vert \Sigma_{12} (\tau_0)\vert $, $\vert \Sigma_{13}  (\tau_0) \vert$, $\vert \Sigma_{23} (\tau_0)\vert $, $\vert \Sigma_{+} (\tau_0)\vert $, $\vert \Sigma_{-} (\tau_0)\vert $ and $\hat{P}(\tau_0)$ are non-zero and sufficiently small. Then, there exists a small positive constant $\varepsilon$ such that the following estimates hold:
\begin{align*}
H &=  \frac23 t^{-1} (1+ O(\varepsilon t^{-\frac23} )),\\
 q & = \frac12 + O(\varepsilon t^{-\frac23}),\\
 B &= O \left(\varepsilon t^{-\frac13} \right),\\
 \Sigma_{ij} & = O \left(\varepsilon t^{-1} \right), \quad i \neq j\\
  \Sigma_{-} & = O \left(\varepsilon t^{-1} \right),\\
    \Sigma_{+} & = O \left(\varepsilon t^{-\frac23} \right),\\
 \hat{P} &= O \left(\varepsilon t^{-\frac43} \right).
\end{align*}
\end{thm}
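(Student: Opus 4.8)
The plan is to read this theorem as the translation of Lemma \ref{lem1} from the dimensionless time $\tau$ to the proper time $t$ through the relation \eqref{dtdtau}. All the decay in $\tau$ is already in hand, so the work is purely a change of variables: (i) pin down the asymptotics of $H$ as a function of $\tau$, (ii) integrate \eqref{dtdtau} and invert to obtain a dictionary between $e^{-\tau}$ and powers of $t$, and (iii) substitute. I would begin with $q$. From \eqref{H-1} and \eqref{q},
\[
q-\tfrac12 = \tfrac14\,\Sigma_{ab}\Sigma^{ab} + \tfrac16\Bigl(\tfrac{S}{H^2}+\tfrac12 B^2\Bigr)\ge 0 .
\]
Inserting the Lemma's bounds \eqref{goalsigma}--\eqref{goalP} together with \eqref{boundSH2}, every term on the right is at most $O(\varepsilon e^{-\tau})$, the slowest-decaying being the magnetic contribution $\tfrac{1}{12}B^2$; hence $q=\tfrac12+O(\varepsilon e^{-\tau})$.

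Next I would extract $H(\tau)$. Since $\frac{d}{d\tau}\ln H=\frac{\dot H}{H^2}=-(1+q)$, one has $\frac{d}{d\tau}\ln\!\bigl(He^{\frac32\tau}\bigr)=-(q-\tfrac12)=O(\varepsilon e^{-\tau})$. The right-hand side is integrable on $[\tau_0,\infty)$, so $He^{\frac32\tau}$ converges to a finite positive constant $A$ — positivity being guaranteed by the hypothesis $H(\tau_0)>0$ and the finiteness of $\ln H$ along the flow — and the tail estimate yields
\[
H = A\,e^{-\frac32\tau}\bigl(1+O(\varepsilon e^{-\tau})\bigr).
\]

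Then comes the dictionary. Integrating \eqref{dtdtau}, $\frac{dt}{d\tau}=\frac1H=\frac1A e^{\frac32\tau}\bigl(1+O(\varepsilon e^{-\tau})\bigr)$, so $t=\frac{2}{3A}e^{\frac32\tau}+C_0+O(\varepsilon e^{\frac12\tau})$ for some constant $C_0$. Because $\frac{dt}{d\tau}=1/H>0$ and $He^{\frac32\tau}\to A$, the variable $t$ increases to $+\infty$ and the leading term dominates, so inverting gives $e^{\frac32\tau}=\frac{3A}{2}\,t\,\bigl(1+O(\varepsilon t^{-2/3})\bigr)$ and therefore $e^{-\tau}=C\,t^{-2/3}\bigl(1+O(\varepsilon t^{-2/3})\bigr)$, with the companion powers $e^{-\frac12\tau}\sim t^{-1/3}$, $e^{-\frac32\tau}\sim t^{-1}$, $e^{-2\tau}\sim t^{-4/3}$. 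Substituting $e^{-\frac32\tau}=\frac{2}{3A}t^{-1}(1+\cdots)$ into the formula for $H$, the constant $A$ cancels and produces the universal coefficient $H=\frac23 t^{-1}\bigl(1+O(\varepsilon t^{-2/3})\bigr)$; likewise $q=\frac12+O(\varepsilon t^{-2/3})$. Feeding the same dictionary into \eqref{goalB}--\eqref{goalP} converts each $\tau$-decay into the asserted $t$-decay for $B$, the $\Sigma_{ij}$, $\Sigma_-$, $\Sigma_+$ and $\hat P$.

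The substitutions themselves are routine; the one step that requires care is the inversion of $t(\tau)$. One must verify that the additive constant $C_0$ and the $O(\varepsilon e^{\frac12\tau})$ error each contribute only corrections of order $O(t^{-1})$ and $O(\varepsilon t^{-2/3})$ respectively, so that they are absorbed into the stated $O(\varepsilon t^{-2/3})$ remainder and spoil neither the leading constant $\tfrac23$ nor the error exponent. This bookkeeping of the error terms — rather than any conceptual difficulty — is the main obstacle, since the whole theorem is otherwise a corollary of Lemma \ref{lem1}.
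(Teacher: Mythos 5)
Your proof is correct, but it takes a genuinely different route from the paper's. The paper works in proper time $t$ from the outset: it first normalizes the time origin by choosing $t_0=\frac{2}{3H(t_0)}$, integrates \eqref{H-1} in $t$ to get the exact representation $H=\bigl(\tfrac32 t+I\bigr)^{-1}$ with $I$ as in \eqref{I}, derives a preliminary dictionary $C_1 t^{-2/3}\le e^{-\tau}\le C_2 t^{-2/3+C\varepsilon}$ from the two-sided bound $\tfrac32\le-\dot H/H^2\le\tfrac32+C\varepsilon$, estimates $0\le I\le C\varepsilon t^{1/3+C\varepsilon}$, and then must run ``another loop'' to remove the spurious $C\varepsilon$ from the exponents. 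You instead stay in $\tau$-time: since $q-\tfrac12=O(\varepsilon e^{-\tau})$ is integrable, $\ln\bigl(He^{\frac32\tau}\bigr)$ converges, giving $H=Ae^{-\frac32\tau}\bigl(1+O(\varepsilon e^{-\tau})\bigr)$ with no exponent loss, and inverting $t(\tau)$ while watching $A$ cancel yields all the $t$-decay rates in a single pass, with no iteration -- a cleaner mechanism. What your version glosses over, and what the paper's first step is designed for, is that the formula $H=\tfrac23 t^{-1}\bigl(1+O(\varepsilon t^{-2/3})\bigr)$ is not invariant under shifting the origin of $t$: your constants $A$ and $C_0$ are data-dependent and not $O(\varepsilon)$, so the contribution $C_0 t^{-1}$ is absorbed into $O(\varepsilon t^{-2/3})$ only for $t$ beyond a threshold of size $(|C_0|/\varepsilon)^3$; your estimates therefore hold asymptotically with implied constants and thresholds depending on the data, whereas the paper's choice $t_0=\frac{2}{3H(t_0)}$ (legitimate, since one may freely shift the time origin) kills the additive constant identically and gives the estimates uniformly from $t_0$ on. Adopting that same normalization discharges exactly the bookkeeping you flag at the end, after which your argument stands as a correct, iteration-free alternative to the paper's proof.
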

\begin{proof}
Given initial data one can always add an arbitrary constant to the time origin and we choose our time coordinate and time origin such that 
\begin{align}\label{timeoriginchoice}
t_0= \frac{2}{3H(t_0)},
\end{align}
which is a positive number since $H(t_0)$ is positive.
With the estimates obtained in Lemma \ref{lem1} we obtain from \eqref{H-1} the following bound
\begin{align*}
\frac32 \leq - \frac{\dot{H}}{H^2} \leq \frac32 + C\varepsilon,
\end{align*}
which integrated and using \eqref{timeoriginchoice}
\begin{align*}
\frac32 t \leq \frac{1}{H} \leq \left( \frac32 + C \varepsilon \right) t - \varepsilon t_0  \leq \left( \frac32 + C \varepsilon \right) t.
\end{align*}
Using this bound in \eqref{dtdtau} we have
\begin{align*}
\frac32 t \leq \frac{dt}{d\tau} \leq \left( \frac32 + C \varepsilon \right) t , 
\end{align*}
which integrated and making some computations gives
\begin{align}\label{ttau}
 C_1 t^{-\frac23} \leq e^{-\tau} \leq C_2 t^{-\frac23 + C \varepsilon},
\end{align}
where $C_1$ and $C_2$ are some positive constants.
Again from \eqref{H-1} we obtain
\begin{align} \label{Hintermediate}
H = \frac{1}{\frac32 t + I } =\frac23 t^{-1} \frac{1}{1+\frac23 t^{-1} I}, 
\end{align}
with 
\begin{align}\label{I}
I = \int_{t_0}^t  \left[ \frac{1}{4} \Sigma_{ab}\Sigma^{ab} + \frac16 \left (\frac{S}{H^2}+ \frac12 B^2 \right) \right] ds.
\end{align}
Using the estimates of Lemma \ref{lem1} with \eqref{ttau} we obtain from \eqref{I}
\begin{align*}
0 \leq I \leq C \varepsilon t^{\frac13+C\varepsilon}.
\end{align*}
Using this in \eqref{Hintermediate} we obtain the estimate
\begin{align}
H =  \frac23 t^{-1} (1+ O(\varepsilon  t^{-\frac23+C\varepsilon} )).
\end{align}
Using this estimate again in \eqref{dtdtau} we can improve \eqref{ttau} getting rid of the $\varepsilon$, i.e.~
\begin{align}\label{ttauagain}
 e^{-\tau} = O( t^{-\frac23}).
\end{align}
Making another loop we obtain our desired estimate for $H$ and we can express the estimates of Lemma \ref{lem1} in terms of $t$.
\end{proof}

\section{The positive cosmological constant case} \label{lambda}

In presence of a cosmological constant $\Lambda$, which we will assume to be positive, the Einstein equations are in a general frame
\begin{align*}
G_{\alpha\beta}=T_{\alpha\beta} - \Lambda g_{\alpha \beta},
\end{align*}
where we have put the cosmological constant on the right hand side, so that we can consider the term with a cosmological constant as an extra term in the energy-momentum tensor, like $T^{\Lambda}_{\alpha \beta}=  - \Lambda g_{\alpha}$. Considering again an orthonormal frame we have that the contributions to the energy-momentum tensors in the Einstein equations by this term will be
\begin{align*}
\rho^{\Lambda} = \Lambda, \quad \mathcal{P}^{\Lambda} = - \Lambda, \quad q^{\Lambda}_a =0 , \quad \pi_{ab}^{\Lambda}= 0.
\end{align*}
This means that the only Einstein equations in \eqref{evH}--\eqref{momentumconstraint} which will be modified are \eqref{evH} and \eqref{constraintrho} which are now as follows:
\begin{align}
\label{1Lambda} &\dot{H}= -H^2 - \frac13  \sigma_{ab}\sigma^{ab}- \frac16 (\rho^{\mathrm{Vl}}+S+h^2  -2 \Lambda),\\
\label{2Lambda} &  \rho^{\mathrm{Vl}}+\frac12 h^2+\Lambda = 3H^2-\frac12 \sigma_{ab}\sigma^{ab}.
\end{align}

We obtain the following result
\begin{thm}
Consider solutions to the Einstein-Vlasov system with a pure magnetic field, a positive cosmological constant $\Lambda$ and with Bianchi I symmetry. Let  $\Sigma_{ij} (t_0)$, $H(t_0)>0$, $B(t_0)$ and $f(t_0)$ be initial data of the Einstein-Vlasov system satisfying the constraints such that $f(t_0)$ has compact support in momentum space (away from the origin in the massless case). Then, the following estimates hold:
\begin{align}
H &=  \sqrt{\frac{\Lambda}{3}} + O \left( e^{-2\sqrt{\frac{\Lambda}{3}}t}\right),\\
\Sigma_{ab} \Sigma^{ab} & = O \left(e^{-2\sqrt{\frac{\Lambda}{3}}t}\right), \\
B & = O \left(e^{-\sqrt{\frac{\Lambda}{3}}t}\right),\\
q & = -1 +  O \left(e^{-2\sqrt{\frac{\Lambda}{3}}t}\right).
\end{align}
\end{thm}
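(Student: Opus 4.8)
The plan is to mimic the strategy of the massive case (Lemma~\ref{lem1} and the first theorem), but now the cosmological constant $\Lambda$ dominates the dynamics and forces a de~Sitter-like attractor. The key structural change is equation~\eqref{1Lambda}: the constraint~\eqref{2Lambda} shows that $3H^2 = \rho^{\mathrm{Vl}} + \tfrac12 h^2 + \Lambda + \tfrac12\sigma_{ab}\sigma^{ab} \geq \Lambda$, so $H$ is bounded below by $\sqrt{\Lambda/3}$, and all matter quantities are controlled by $3H^2$ from above. First I would show that $H$ is monotone decreasing and converges to $H_\infty = \sqrt{\Lambda/3}$. From~\eqref{1Lambda}, rewriting with the constraint to eliminate $\rho^{\mathrm{Vl}}$, one expects an equation of the schematic form $\dot H = -3H^2 + \Lambda + (\text{non-negative lower-order terms})$, which drives $H$ down toward the fixed point $\sqrt{\Lambda/3}$. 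The essential point is that near this fixed point the linearised decay rate is $-2\sqrt{\Lambda/3}$, since $\tfrac{d}{dH}(-3H^2+\Lambda)\big|_{H_\infty} = -6H_\infty = -2\sqrt{3\Lambda}$ divided by the appropriate normalisation gives the stated exponent.

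Second, I would establish the decay of the shear. The shear evolution equations are \eqref{22}--\eqref{13} but now with $q \to -1$ (since $q = -1 - \dot H/H^2$ and $\dot H \to 0$), so each $\Sigma_{ab}$ satisfies schematically $\dot\sigma_{ab} = -3H\sigma_{ab} + (\text{quadratic and magnetic source terms})$. Because $3H \to 3H_\infty = \sqrt{3\Lambda}$, integrating this linear-with-small-source ODE yields $\sigma_{ab} = O(e^{-\sqrt{3\Lambda}\,t})$, and hence $\sigma_{ab}\sigma^{ab} = O(e^{-2\sqrt{\Lambda/3}\,t})$ as claimed (note $\sqrt{3\Lambda} = 3\sqrt{\Lambda/3}$, but the squared quantity decays at rate $2\sqrt{\Lambda/3}$ only if the linear rate is $\sqrt{\Lambda/3}$; I would recheck the precise coefficient against \eqref{1Lambda}--\eqref{2Lambda} during the writeup). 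The magnetic field decay follows from the Maxwell equation~\eqref{maxwellBianchiI}: $\dot h = (-\tfrac23\theta + \sigma_{11})h = (-2H + \sigma_{11})h$, and since $\sigma_{11} \to 0$ while $H \to H_\infty$, Grönwall gives $h = O(e^{-2H_\infty t})$. The stated rate $e^{-\sqrt{\Lambda/3}\,t}$ for $B = h/H$ then requires tracking the $H$-normalisation carefully, and I would verify the exponents are mutually consistent. Finally, $q = -1 + O(e^{-2\sqrt{\Lambda/3}\,t})$ follows directly from $q = -1 - \dot H/H^2$ once the convergence rate of $\dot H$ is known.

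The technical backbone, as in the massive case, is a bootstrap argument: assume weak exponential decay for $\Sigma_{ab}$, $B$, and the dispersion $\hat P$ on $[t_0,t)$, use these to control the source terms in the evolution equations, and recover stronger decay, thereby closing the argument and extending it to $[t_0,\infty)$. The dispersion estimate~\eqref{estimateP} still applies since the Vlasov characteristic equation~\eqref{support} is unchanged by $\Lambda$; with the shear small, $\hat P$ decays, so $S/H^2$ and $\Pi_{ij}$ are negligible, exactly as in~\eqref{boundSH2}--\eqref{Pi}. The main obstacle I anticipate is \emph{pinning down the sharp exponents}: unlike the power-law massive case where the deceleration parameter sat at $q=\tfrac12$, here $q\to-1$ produces genuine exponential decay, and the constant $\sqrt{\Lambda/3}$ must emerge cleanly from the fixed-point analysis of~\eqref{1Lambda}. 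I would need to confirm that the lower-order matter and magnetic contributions decay strictly faster than the leading de~Sitter relaxation, so that they do not degrade the rate; this amounts to checking that the bootstrap can be closed with the source terms genuinely subdominant, which is the step most sensitive to the precise algebra of~\eqref{1Lambda}--\eqref{2Lambda}. A secondary subtlety is the massless case (permitted here, unlike Lemma~\ref{lem1}): then $p^0 = |p|$ and the bound $p^0 \geq 1$ in~\eqref{supportbound} fails, but since $\Lambda$ now supplies the dominant expansion independently of the matter, the argument should still go through with the dispersion controlled purely through~\eqref{estimateP}.
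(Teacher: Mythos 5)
There is a genuine gap, and it is structural: your ``technical backbone'' is a small-data bootstrap, but the theorem has \emph{no smallness hypotheses}. Unlike Lemma~\ref{lem1}, here $\Sigma_{ij}(t_0)$, $B(t_0)$ and $f(t_0)$ are arbitrary (only $H(t_0)>0$ is assumed), so a bootstrap that assumes weak exponential decay of $\Sigma_{ab}$, $B$, $\hat P$ and controls the quadratic terms in \eqref{22}--\eqref{13} by smallness cannot even get started; likewise your dispersion control fails for large data, since \eqref{estimateP} gives decay only when $\sqrt{\Sigma_{ab}\Sigma^{ab}}<1$, whereas the constraint allows $\Sigma_{ab}\Sigma^{ab}$ up to $6$. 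The idea you are missing --- the one that makes the large-data statement possible --- is that the constraint \eqref{2Lambda} converts decay of $3H^2-\Lambda$ \emph{directly} into decay of each of the non-negative quantities $\rho^{\mathrm{Vl}}$, $\tfrac12 h^2$, $\tfrac12\sigma_{ab}\sigma^{ab}$, with no evolution equations needed. That is exactly how the paper argues: from \eqref{1Lambda}--\eqref{2Lambda} and non-negativity one gets $\dot H\le -H^2+\Lambda/3\le 0$ and $H^2\ge\Lambda/3$; integrating the Riccati inequality exactly, via
\begin{equation*}
\frac{d}{dt}\left[\ln\frac{H-\sqrt{\Lambda/3}}{H+\sqrt{\Lambda/3}}\right]\le -2\sqrt{\frac{\Lambda}{3}},
\end{equation*}
yields $H=\sqrt{\Lambda/3}+O(e^{-2\sqrt{\Lambda/3}\,t})$; then $3H^2-\Lambda=O(e^{-2\sqrt{\Lambda/3}\,t})$, and dividing by $H^2\ge\Lambda/3$ gives the stated bounds for $\Sigma_{ab}\Sigma^{ab}$ and $B$; finally $q$ follows algebraically from \eqref{q}, \eqref{1Lambda} and \eqref{energyconditions}. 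No bootstrap, no Maxwell equation, no Vlasov analysis; in particular your ``secondary subtlety'' about the massless case (failure of $p^0\ge 1$ in \eqref{supportbound}) dissolves, because the mass never enters the argument.

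A secondary problem is your rate bookkeeping, which you flagged yourself but did not resolve. Eliminating $\rho^{\mathrm{Vl}}$ from \eqref{1Lambda} via \eqref{2Lambda} gives $\dot H=-\tfrac32\left(H^2-\tfrac{\Lambda}{3}\right)-\tfrac14\sigma_{ab}\sigma^{ab}-\tfrac16 S-\tfrac1{12}h^2$, not $\dot H=-3H^2+\Lambda+\cdots$, and your linearised value $-2\sqrt{3\Lambda}=-6\sqrt{\Lambda/3}$ does not become $-2\sqrt{\Lambda/3}$ under any ``appropriate normalisation''. The exponent $2\sqrt{\Lambda/3}$ in the theorem is not a fixed-point linearisation rate at all: it comes from the \emph{weaker} inequality $\dot H\le-(H^2-\Lambda/3)$ (obtained by merely discarding the non-negative matter terms), whose exact integration gives the factor $2\sqrt{\Lambda/3}=2H_\infty$. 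Relatedly, the theorem's bounds are upper bounds and are not claimed to be sharp, so trying to reverse-engineer them from the shear and Maxwell evolution equations (where homogeneous rates like $e^{-3H_\infty t}$ and $e^{-2H_\infty t}$ appear) leads to the mismatches you noticed; those equations would, if anything, give faster rates, but only under the smallness assumptions that the theorem does not make.
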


\begin{proof}
From the equations \eqref{1Lambda}--\eqref{2Lambda}, since $ \sigma_{ab}\sigma^{ab}$, $S$, $\rho^{\mathrm{Vl}}$ and $h^2$ are positive, we obtain the inequalities
\begin{align}
 &-\dot{H}=  H^2 +\frac13  \sigma_{ab}\sigma^{ab} + \frac16 (\rho^{\mathrm{Vl}}+S+h^2)  -\frac13 \Lambda \geq H^2  -\frac13 \Lambda  ,\\
\label{constraintlambda}&  \rho^{\mathrm{Vl}}+\frac12 h^2+\frac12 \sigma_{ab}\sigma^{ab} = 3H^2- \Lambda  \geq 0.
\end{align}
Using the second inequality in the first we obtain
\begin{align}\label{ineqH}
\dot{H} \leq - H^2 + \frac13 \Lambda \leq 0,
\end{align}
which means that $H$ is decreasing and also implies 
\begin{align}\label{boundH2}
H^2 \geq \frac13 \Lambda.
\end{align}
Thus if $H(t_0)>0$, this will always be the case. In order to obtain an estimate of $H$ we integrate \eqref{ineqH}. If  $- H^2 + \frac13 \Lambda =0$ we have 
\begin{align}\label{Hequality}
H = \sqrt{\frac{\Lambda}{3}}.
\end{align}
If  $- H^2 + \frac13 \Lambda \neq 0$ we obtain from \eqref{ineqH}
\begin{align*}
\frac{\dot{H}}{H^2 - \frac13 \Lambda } \leq -1,
\end{align*}
which can be expressed as
\begin{align*}
\frac{d}{dt} \left[  \ln  \frac{H - \sqrt{\frac{\Lambda}{3}}}{H + \sqrt{\frac{\Lambda}{3}}} \right ] \leq -2\sqrt{\frac{\Lambda}{3}}.
\end{align*}
Integrating the above we have that
\begin{align}\label{Hresult}
H = \sqrt{\frac{\Lambda}{3}} + C e^{-2\sqrt{\frac{\Lambda}{3}}t} ,
\end{align}
since $H$ is decreasing. Combining \eqref{Hequality} and \eqref{Hresult} we have the desired result. From \eqref{constraintlambda} we obtain that $h^2$ and $\sigma_{ab}\sigma^{ab}$ are bounded by $O(e^{-2\sqrt{\frac{\Lambda}{3}}t})$. Since $H^2 $ is bounded from below by a constant due to \eqref{boundH2} we obtain the desired estimates for $\Sigma_{ab}\Sigma^{ab}$ and $B$. Using the definition \eqref{q} of $q$ with \eqref{1Lambda} and having in mind \eqref{energyconditions} we obtain the estimate for $q$.
\end{proof}

\section{Conclusions and Outlook}

 We have considered small solutions to the Einstein-Vlasov system with Bianchi I symmetry and a pure magnetic field because Bianchi I magnetic space-times are of physical interest and might help to solve the Hubble tension \cite{Akarsu,CKPM,DV, Jedamzik}. We have shown that these solutions have a dust-like behaviour at late times and have obtained the decay rates of the main variables involved generalising \cite{EN}. 
 
 The dust case with a magnetic field had already been studied in \cite{LB} for Bianchi I space-times. It turns out that for a perfect fluid with a magnetic field the Bianchi I case \cite{LB} is somehow more complicated than the Bianchi VI$_0$ case which had been studied earlier \cite{LKW}, because diagonalising the shear tensor leads to complications in the former case (cf.\ \cite{LB} for details) and it was thus preferable to deal with a non-vanishing local angular velocity $\Omega^a$ of the spatial frame. 
 
 The framework developed in \cite{LB} works nicely in the Vlasov case because what really matters is the dispersion of the momenta which still decreases if the universe is expanding and will not depend on the local angular velocity as we have demonstrated. This shows that the techniques developed in \cite{EN} which also have been applied to higher Bianchi A types \cite{LN2, E4,E3} and a Bianchi B type \cite{E6}, massless particles \cite{BFH, B, LNS, LNT2} and the Boltzmann equation \cite{LN} are robust. Note that in contrast to the previous work we have worked entirely using an orthonormal frame. This has made not only the generalisation of \cite{LB} easier, but will also be useful for a generalisation of the Boltzmann case \cite{Ayissi,LN,LN3} because in the case of an orthonormal frame the collision operator has the same form as for the Minkowski case. The restrictions on the collision kernel will depend on the assumption or not of a cosmological constant. 
 
 It would also be of interest to generalise the present work to higher Bianchi types with a magnetic field \cite{C, HW, LB, LB2, LKW,LRT, Weaver}  both towards the future or the direction of the singularity and to show non-linear stability of not necessarily symmetric solutions of the Einstein-Vlasov system with a magnetic field \cite{AR, E5, Ring, Svedberg}.

\section*{Acknowledgements}
The authors thank John Stalker from Trinity College Dublin and Paul Tod from the University of Oxford for helpful discussions.


\begin{thebibliography}{10}



\bibitem{Akarsu}
O. Akarsu, S. Kumar, S. Sharma and L. Tedesco.
\newblock{Constraints on a Bianchi type I spacetime extension of the standard $\Lambda$CDM model.}
\newblock \emph{Phys. Rev. D} 100, 023532, 2019.

\bibitem{AR}
H.~Andreasson and H.~Ringström. 
\newblock{Proof of the cosmic no-hair conjecture in the $T^3$-Gowdy symmetric Einstein-Vlasov setting.} 
\newblock{\em J. Eur. Math. Soc.} 18, 1565--1650, 2016.

  \bibitem{Ayissi}
 R.D. Ayissi and N. Noutchegueme.
    \newblock{Bianchi type-I magnetized cosmological models for the Einstein-Boltzmann equation with the cosmological constant.}
 \newblock \emph{J. Math. Phys.} 56, 012501, 2015.
         
    \bibitem{BFH}
      H. Barzegar, D. Fajman and G. Heißel.
      \newblock{Isotropization of slowly expanding spacetimes.}
        \newblock \emph{ Phys. Rev. } D 101, 044046, 2020.
     
      \bibitem{B}
     H. Barzegar.
      \newblock{Future attractors of Bianchi types II and V cosmologies with massless Vlasov matter.}
     \newblock \emph{Class. Quantum Grav.}  38, 065019, 2021.
     
  \bibitem{BF}   
     H. Barzegar and D. Fajman.
     \newblock{Stable cosmologies with collisionless charged matter.}
    \newblock \emph{ J. Hyperbolic Differ.} 19, 4, 587--634, 2022.
    
        \bibitem{CKPM}
    
    R. Casadio, A. Kamenshchik, P. Petriakova and P. Mavrogiannis.
    \newblock{Bianchi cosmologies, magnetic fields, and singularities.}
      \newblock \emph{Phys. Rev.} D 108, 084059, 2023.
    
     \bibitem{C}
     
       S.R. Crowe.
     \newblock{Dynamical Systems Analysis of Magnetic Bianchi III Cosmologies.}
        \newblock MMath Thesis, University of Waterloo, Waterloo, Ontario, 1998.
     
     \bibitem{John}
F.~John.
\newblock {\em {Partial Differential Equations.}}
\newblock Springer-Verlag New York Inc., 1982.
     
     \bibitem{DV}
     E. Di Valentino et. al. 
    \newblock{In the realm of the Hubble tension - a review of solutions.}
\newblock \emph{Class. Quantum Grav.} 38 153001, 2021.
     
      \bibitem{HU}
      J.M.~Heinzle and C.~Uggla.
       \newblock{Dynamics of the spatially homogeneous Bianchi type I Einstein-Vlasov equations.}
     \newblock \emph{Class. Quantum Grav.} 23 3463-3490, 2006.
     
     \bibitem{HW}
     J.T.~Horwood and J.~Wainwright.
     \newblock{Asymptotic Regimes of Magnetic Bianchi Cosmologies.}
     \newblock {\em Gen. Rel. Grav.}  36,  4, 2004.

\bibitem{Jedamzik}
K. Jedamzik and L. Pogosian.
\newblock{Relieving the Hubble tension with primordial magnetic fields.}
  \newblock {\em Phys. Rev. Lett.} 125, 181302, 2020.

\bibitem{LB}
 V.G. LeBlanc.
\newblock{Asymptotic states of magnetic Bianchi I cosmologies.}
\newblock \emph{Class. Quant. Grav.} 14, 52281-2301, 1997.


\bibitem{LB2}
V. G. LeBlanc.
\newblock{Bianchi II magnetic cosmologies.}
\newblock \emph{Class. Quant. Grav.} 15, 1607-1626, 1998.

\bibitem{LKW}
V.G. LeBlanc, D. Kerr, D. and J. Wainwright.
\newblock{Asymptotic states of magnetic Bianchi VI$_0$ cosmologies.}
\newblock \emph{Class. Quant. Grav.} 12, 513-541, 1995.


\bibitem{HL}
  H. Lee.
\newblock{Asymptotic behaviour of the Einstein-Vlasov system with a positive cosmological constant.}
\newblock{Math. Proc. Cambridge Phil. Soc.} 137, 495-509, 2004.
 
 \bibitem{LN}
H.~Lee and E.~Nungesser.
\newblock {Future global existence and asymptotic behaviour of solutions to the
  Einstein-Boltzmann system with Bianchi I symmetry.}
\newblock {\em J. Differ. Equations}, 262, 11:5425--5467, 2017.    

\bibitem{LN3}
H.~Lee and E.~Nungesser.
\newblock {Bianchi I solutions of the Einstein-Boltzmann system with a positive cosmological constant.}
\newblock {\em J. Math. Phys.} 58, 092501, 2017. 
     
\bibitem{LN2}
H.~Lee and E.~Nungesser.
\newblock {Self-similarity breaking of cosmological solutions with collisionless matter.}
\newblock {\em Ann. Henri Poincar\'{e}} 19, 7:2137–2155, 2018. 


\bibitem{LNS}
H.~Lee, E.~Nungesser and J.~Stalker.
\newblock{On almost EGS theorems.}
\newblock{\em Class. Quant. Grav. } 39, 105006, 2022.

 \bibitem{LNT2}
H.~Lee, E.~Nungesser and K.P.~Tod.
\newblock {On the future of solutions to the massless Einstein-Vlasov system in a Bianchi I cosmology.}
\newblock {\em Gen. Rel. Grav.} 52, 48,  2020.
 
 \bibitem{LRT}
 
 S. Liebscher, A. D. Rendall and S. B. Tchapnda.
\newblock {Oscillatory Singularities in Bianchi Models with Magnetic Fields.}
 \newblock {\em Ann. Henri Poincar\'{e}} 14, 1043--1075, 2013.
 
 \bibitem{NT}
 N.~Noutchegueme and E.M. Tetsadijo.
 \newblock {Global dynamics for a collisionless charged plasma in Bianchi spacetimes.}
  \newblock {\em Class. Quant. Grav.}  26, 195001, 2009.
 

\bibitem{EN}
E.~Nungesser.
\newblock {Isotropization of non-diagonal Bianchi I spacetimes with collisionless matter at late times assuming small data.}
 \newblock {\em Class. Quant. Grav.} 27, 235025, 2010.
 
 \bibitem{E4}
E.~Nungesser.
 \newblock{Future non-linear stability for solutions of the Einstein-Vlasov system of Bianchi types II and VI$_0$.}
\newblock{ \em{J. Math. Phys.}} 53, 102503, 2012.
   
   \bibitem{E3}
E.~Nungesser.
 \newblock{Future non-linear stability for reflection symmetric solutions of the Einstein-Vlasov system of Bianchi types II and VI$_0$.}
\newblock{ \em{Ann. Henri Poincare}} 14,4, 967-999, 2013.

   
   \bibitem{E6}
E.~Nungesser, L.~Andersson, S.~Bose and A.A.~Coley.
 \newblock{Isotropization of solutions of the Einstein-Vlasov system with Bianchi V symmetry.}
\newblock{ \em{Gen. Rel. Gravit.}} 46, 1628, 2013.

\bibitem{E5}
E.~Nungesser.
 \newblock{Future global non-linear stability for surface symmetric solutions of the Einstein-Vlasov system with a cosmological constant.}
\newblock{ \em{J. Hyp. Diff. Eq.}} 12, 3, 447-468, 2015.
   
\bibitem{Rendall}
A.D. Rendall.
\newblock {\em {Partial differential equations in general relativity.}}
\newblock Oxford University Press, Oxford, 2008.

\bibitem{Ring}
H.~Ringstr{\"{o}}m.
\newblock {\em {On the Topology and Future Stability of the Universe.}}
\newblock Oxford University Press, Oxford, 2013.

\bibitem{Svedberg}
  C.~Svedberg.
  \newblock{Future Stability of the Einstein-Maxwell-Scalar Field System.}
  \newblock{\em Ann. Henri Poincar\'{e}} 12,5, 849-917, 2011. 

\bibitem{Tchapnda}
S.B.~Tchapnda.
\newblock{On surface-symmetric spacetimes with collisionless and charged matter.}
  \newblock{\em Ann. Henri Poincar\'{e}}, 8, 1221--1253, 2007.


\bibitem{WE}
J.~ Wainwright, J and G.F.R.~Ellis, [eds.]
\newblock {\em {Dynamical Systems in Cosmology.}}
\newblock Cambridge University Press, Cambridge, 1997.

\bibitem{Weaver}

M. Weaver.
\newblock{Dynamics of magnetic Bianchi type VI$_0$ cosmologies.}
 \newblock {\em Class. Quant. Grav.} 17, 421-.434, 2000.

\end{thebibliography}
\end{document}